\documentclass[sigconf]{}
\usepackage{booktabs} 
\usepackage{tikz}
\usepackage{subcaption}
\usepackage[linesnumbered,ruled]{algorithm2e}
\usepackage{enumitem}
\usepackage{multirow}
\usepackage{graphics}
\usepackage{array}
\usepackage{tabularx}
\newcolumntype{P}[1]{>{\centering\arraybackslash}p{#1}}
\newcolumntype{M}[1]{>{\centering\arraybackslash}m{#1}}
\graphicspath{ {figs/} }
\usepackage{balance}
\usepackage{flushend}
\theoremstyle{plain}
\newtheorem{theorem}{Theorem}[section]

\newtheorem{prop}[theorem]{Proposition}

\theoremstyle{definition}
\newtheorem{definition}{Definition}[section]
%
\begin{document}
\title{Spectral Collaborative Filtering}
\author{Lei Zheng}
\affiliation{
\institution{Department of Computer Science \\ University of Illinois at Chicago}
\state{IL}
\country{US}
}
\email{lzheng21@uic.edu}
\author{Chun-Ta Lu}
\affiliation{
\institution{Department of Computer Science \\ University of Illinois at Chicago}
\state{IL}
\country{US}
}
\email{clu29@uic.edu}
\author{Fei Jiang}
\affiliation{
\institution{Department of Computer Science \\ Peking University}
\city{Beijing}
\country{China}
}
\email{fei.jiang1989@pku.edu.cn}
\author{Jiawei Zhang}
\affiliation{
\institution{IFM Lab, Department of Computer Science \\ Florida State University}
\city{FL}
\country{US}
}
\email{jiawei@ifmlab.org}

\author{Philip S. Yu}
\affiliation{
\institution{Department of Computer Science \\ University of Illinois at Chicago}
\state{IL}
\country{US}
}
\affiliation{
\institution{Institute for Data Science \\ Tsinghua University}
\state{Beijing}
\country{China}
}
\email{psyu@uic.edu}

\begin{abstract}
Despite the popularity of Collaborative Filtering (CF), CF-based methods are haunted by the \textit{cold-start} problem, which has a significantly negative impact on users' experiences with Recommender Systems (RS). In this paper, to overcome the aforementioned drawback, we first formulate the relationships between users and items as a bipartite graph. Then, we propose a new spectral convolution operation directly performing in the \textit{spectral domain}, where not only the proximity information of a graph but also the connectivity information hidden in the graph are revealed. With the proposed spectral convolution operation, we build a deep recommendation model called Spectral Collaborative Filtering (SpectralCF). Benefiting from the rich information of connectivity existing in the \textit{spectral domain}, SpectralCF is capable of discovering deep connections between users and items and therefore, alleviates the \textit{cold-start} problem for CF. To the best of our knowledge, SpectralCF is the first CF-based method directly learning from the \textit{spectral domains} of user-item bipartite graphs. We apply our method on several standard datasets. It is shown that SpectralCF significantly outperforms state-of-the-art models. Code and data are available at \url{https://github.com/lzheng21/SpectralCF}.
\end{abstract}
\copyrightyear{2018} 
\acmYear{2018} 
\setcopyright{acmcopyright}
\acmConference[RecSys '18]{Twelfth ACM Conference on Recommender Systems}{October 2--7, 2018}{Vancouver, BC, Canada}
\acmBooktitle{Twelfth ACM Conference on Recommender Systems (RecSys '18), October 2--7, 2018, Vancouver, BC, Canada}
\acmPrice{15.00}
\acmDOI{10.1145/3240323.3240343}
\acmISBN{978-1-4503-5901-6/18/10}

%
%
\begin{CCSXML}
<ccs2012>
<concept>
<concept_id>10002951.10003317.10003347.10003350</concept_id>
<concept_desc>Information systems~Recommender systems</concept_desc>
<concept_significance>500</concept_significance>
</concept>
<concept>
<concept_id>10010147.10010257.10010293.10010294</concept_id>
<concept_desc>Computing methodologies~Neural networks</concept_desc>
<concept_significance>300</concept_significance>
</concept>
</ccs2012>
\end{CCSXML}

\ccsdesc[500]{Information systems~Recommender systems}
\ccsdesc[300]{Computing methodologies~Neural networks}

\keywords{Recommender Systems, Spectrum, Collaborative Filtering}

\maketitle

\section{Introduction}
\label{sec:intro}
The effectiveness of recommender systems (RS) often relies on how well users' interests or preferences can be understood and interactions between users and items can be modeled.
Collaborative Filtering (CF) \cite{koren2009matrix} is one of the widely used and prominent techniques for RS. The underlying assumption of the CF approach is that if a user $u_1$ shares a common item with another user $u_2$, $u_1$ is also likely to be interested in other items liked by $u_2$. Although CF has been successfully applied to many recommendation applications, the \textit{cold-start} problem is considered as one of its major challenges \cite{koren2009matrix}. The problem arises when a user interacted with a very small number of items. Consequently, the user shares few items with other users, and effectively recommending for the user becomes a challenging task for RS.
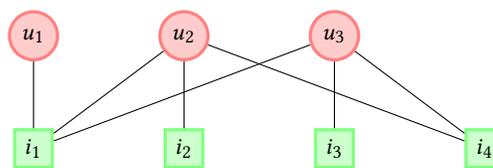
\begin{figure}[t]
\begin{tikzpicture}
 [L1Node/.style={circle,draw=red!50, fill=red!20, very thick, minimum size=5mm},
 L2Node/.style={rectangle,draw=green!50,fill=green!20,very thick, minimum size=5mm}]
       \node[L1Node] (u1) at (2, 2.5){$u_1$};
       \node[L1Node] (u2) at (4, 2.5){$u_2$};
       \node[L1Node] (u3) at (6, 2.5){$u_3$};
       
      \node[L2Node] (i1) at (2, 1){$i_1$};
       \node[L2Node] (i2) at (4, 1){$i_2$};
       \node[L2Node] (i3) at (6, 1){$i_3$};
       \node[L2Node] (i4) at (8, 1){$i_4$};
       \draw [black] (u1)--(i1);
       \draw [black] (u2)--(i1);
       \draw [black] (u2)--(i2);
       \draw [black] (u3)--(i1);
       \draw [black] (u3)--(i3);
       \draw [black] (u2)--(i4);
       \draw [black] (u3)--(i4);
  \end{tikzpicture}
  \caption{A toy example of a user-item bipartite graph $\mathcal{B}$ with edges representing observed user-item interactions. Red circles and green rectangles denote users and items, respectively.}
  \vspace{-1em}
   \label{fig:example}
\end{figure}

If we formulate the relationships between users and items as a bipartite graph\footnote{In this paper, we use the terminology "graph" to refer to the
graph/network structure of data and "network" for the architecture of machine learning models.}, we argue that the connectivity information of the graph can play an important role for tackling the \textit{cold-start} problem. For example, let us see a bipartite graph $\mathcal{B}$ in Figure \ref{fig:example}. A \textit{cold-start} user $u_1$ only interacts with item $i_1$. Since $u_1$ shares $i_1$ with user $u_2$ and user $u_3$, as a result, three items ($i_2$, $i_3$ and $i_4$) connected with $u_2$ or $u_3$ can all be recommended to $u_1$ by a CF-based model. However, a natural and important question arises: which one in the three items is the most reliable recommendation for $u_1$? The key to answer the question lies in the user-item connectivity information. In fact, if we take a look at the connections of the graph, it is clear that there is only one path existing between $u_1$ and $i_2$ (or $i_3$), while two paths connect $u_1$ to $i_4$. Thus, compared with $i_2$ and $i_3$, obviously, $i_4$ is a more reliable recommendation for $u_1$.   

However, existing CF-based methods, including model-based and memory-based approaches, often suffer from the difficulty of modeling the connectivity information. Previous model-based approaches, such as Matrix Factorization (MF) \cite{koren2009matrix}, are usually designed to approximate the direct connections (or proximities). However, indirect connectivity information hidden in the graph structures is rarely captured by traditional model-based approaches. For instance, it is formidable for them to model the number of paths between $u_1$ and $i_4$ in Figure \ref{sec:intro}. Whereas a number of memory-based approaches \cite{sarwar2001item,jamali2009trustwalker} is introduced to model the connectivity information, these methods often rely on pre-defined similarity functions. However, in the real world, defining an appropriate similarity function suitable for diverse application cases is never an easy task.

\textit{Spectral graph theory} \cite{shuman2013emerging} studies connections between combinatorial properties of a graph and the eigenvalues of matrices associated to the graph, such as the laplacian matrix (see Definition \ref{def::lap} in Section \ref{sec:prelim}). In general, the spectrum of a graph focuses on the connectivity of the graph, instead of the geometrical proximity. 
To see how does the \textit{spectral domain} come to help for recommendations and better understand the advantages of viewing a user-item bipartite graph in the spectral perspective, let us revisit the toy example shown in Figure \ref{fig:example}. For the bipartite graph $\mathcal{B}$, we visually plot its vertices in one specific frequency domain. Although vertices do not come with coordinates, a popular way to draw them in a space is to use eigenvectors of a laplacian matrix associated with the graph to supply coordinates \cite{spielman2007spectral}. Figure \ref{fig:plot} shows that, compared with $i_2$ and $i_3$, $i_4$ becomes closer to $u_1$ in the space\footnote{In \textit{spectral graph theory}, smaller (or larger) eigenvalues of the associated laplacian matrix corresponds to lower- (or higher-) frequency domains. In Figure \ref{fig:example}, we plot each vertex $j$ at the point $(\boldsymbol{\mu}_1(j), \boldsymbol{\mu}_2(j))$, where $\boldsymbol{\mu}_l(j)$ indicates the $j_\textrm{th}$ value of the $l_\textrm{th}$ eigenvector of the laplacian matrix $\boldsymbol{L}$.}. Thus, when transformed into the frequency domain, $i_4$ is revealed to be a more suitable choice than $i_2$ or $i_3$ for $u_1$. The underlying reason is that the connectivity information of the graph has been uncovered in the frequency domain, where the relationships between vertices depend on not only their proximity but also connectivity. Thus, exploiting the spectrum of a graph can help better explore and identify the items to be recommended.

Inspired by the recent progress \cite{kipf2016semi,defferrard2016convolutional} in node/graph classification methods, we propose a \textit{spectral graph theory} based method to leverage the broad information existing in the \textit{spectral domain} to overcome the aforementioned drawbacks and challenges. Specifically, to conquer the difficulties (see Section \ref{sec::poly}) of directly learning from the \textit{spectral domain} for recommendations, we first present a new spectral convolution operation (see Eq.~(\ref{eq::poly_filter2})), which is approximated by a polynomial to dynamically amplify or attenuate each frequency domain. Then, we introduce a deep recommendation model, named Spectral Collaborative Filtering (SpectralCF), built by multiple proposed spectral convolution layers. SpectralCF directly performs collaborative filtering in the \textit{spectral domain}.
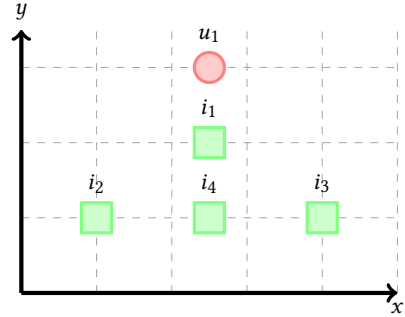
\begin{figure}
\begin{tikzpicture}
[L1Node/.style={circle,draw=red!50, fill=red!20, very thick, minimum size=4mm},
 L2Node/.style={rectangle,draw=green!50,fill=green!20,very thick, minimum size=4mm}]
  \draw[help lines, color=gray!70, dashed] (2,0) grid (7,3.5);

\draw[->,ultra thick] (2,0)--(7,0) node[below]{$x$};
\draw[->,ultra thick] (2,0)--(2,3.5) node[above]{$y$};

  
  \node[L1Node][label={$u_1$}] (u1) at (4.5, 3){};
  \node[L2Node][label={$i_2$}] (i2) at (3, 1){};
  \node[L2Node][label={$i_1$}] (i1) at (4.5, 2){};
    \node[L2Node][label={$i_3$}] (i2) at (6, 1){};
    \node[L2Node][label={$i_4$}] (i4) at (4.5, 1){};    
\end{tikzpicture} 
\vspace{-1em}
\caption{Vertices of the bipartite graph in Figure \ref{fig:example} are plotted in a frequency domain. Note that the vertices not shown above are omitted for simplicity.}
\vspace{-1em}
\label{fig:plot}
\end{figure}

The key contributions of this work can be summarized as follows:
 \begin{itemize}
 \item \textbf{Novelty}: To the best of our knowledge, it is the first CF-based method directly learning from the \textit{spectral domains} of user-item bipartite graphs.
 \item \textbf{A deep recommendation model}: We propose a new spectral convolution operation performing in the \textit{spectral domain}. Stacked by multiple layers of the proposed spectral convolution operation, a deep recommendation model, named Spectral Collaborative Filtering (SpectralCF), is introduced. 
 \item \textbf{Strong Performance}: In the experiments, SpectralCF outperforms state-of-the-art comparative models. It is shown that SpectralCF effectively utilizes the rich information of connectivity existing in the \textit{spectral} domain to ease the \textit{cold-start} problem.
\end{itemize}
  
The rest of the paper is organized as follows. In Section \ref{sec:prelim}, we provide preliminary concepts. Section \ref{sec:model} describes SpectralCF in detail. Experiments are presented in Section \ref{sec:exp} to analyze SpectralCF and demonstrate its effectiveness compared with state-of-the-art techniques for RS. In Section \ref{sec:related}, we give a short review of the works related to our study. Finally, conclusions are presented in Section \ref{sec:con}.

\section{Definitions and Preliminaries}
\label{sec:prelim}
In this section, we present the background and preliminaries of this study. Throughout the paper, we denote scalars by either lowercase or uppercase letters, vectors by boldfaced lowercase letters, and matrices by boldfaced uppercase letters. Unless otherwise specified, all vectors are considered to be column vectors. Let $\boldsymbol{I}$ denote an identity matrix, and $\boldsymbol{1}$ and $\boldsymbol{0}$ denote matrices of ones and zeros, respectively. In addition, we define the following definitions in this paper as:
\begin{definition}
\textit{(Bipartite Graph). A bipartite user-item graph with $N$ vertices and $E$ edges for recommendations is defined as $\mathcal{B} = \{\mathcal{U}, \mathcal{I}, \mathcal{E}\}$, where $\mathcal{U}$ and $\mathcal{I}$ are two disjoint vertex sets of users and items. Every edge $e \in \mathcal{E}$ has the form $e=(u,i)$ where $u \in \mathcal{U}$ and $i \in \mathcal{I}$ and denotes that user $u$ has interacted with item $i$ in the training set}. 
\end{definition}
\begin{definition}
\textit{(Implicit Feedback Matrix). An implicit feedback matrix $\boldsymbol{R}$ is a $|\mathcal{U}| \times |\mathcal{I}|$ matrix defined as following:
\begin{equation}
\boldsymbol{R}_{r,j} = \left\{\begin{array}{cc}
    1 &\text{if } (u_r,i_j) \text{ interaction is observed}, \\
    0 &\text{otherwise}.\\
  \end{array} \right.
\end{equation}  }
\end{definition}
\begin{definition}
\textit{(Adjacent Matrix). For the bipartite graph $\mathcal{B}$, its corresponding adjacent matrix $\boldsymbol{A}$ can be defined as:
\begin{equation}
\boldsymbol{A} = \left[
\begin{array}{cc}
\boldsymbol{0} & \boldsymbol{R} \\
\boldsymbol{R}^\intercal & \boldsymbol{0}
\end{array}
\right],
\end{equation}
where $\boldsymbol{A}$ is an $N \times N$ matrix. }
\end{definition}
\begin{definition}
\label{def::lap}
\textit{(Laplacian Matrix). The random walk laplacian matrix $\boldsymbol{L}$ is defined as $\boldsymbol{L} = \boldsymbol{I} - \boldsymbol{D^{-1}\boldsymbol{A}}$, where $\boldsymbol{I}$ is the $N \times N$ identity matrix and $\boldsymbol{D}$ is the $N \times N$ diagonal degree matrix defined as $D_{nn}=\sum_{j}{A_{n,j}}$.}
\end{definition}
This paper focuses on the recommendation problem with implicit feedbacks, where we only observe whether a person has viewed/liked/clicked an item and do not observe explicit ratings. Let $\mathcal{I}_i^+$ denote the set of all items liked by user $i$ and $\mathcal{I}^-_i$ denote the remaining items. We define the recommendation problem which we study in this paper as the following:
\begin{definition}
\textit{(Problem Definition). Given a user set $\mathcal{U}$ and an item set $\mathcal{I}$, for each user $u \in \mathcal{U}$ who has liked/clicked/viewed an item set $\mathcal{I}^+_u \subseteq \mathcal{I}$, we aim to recommend a ranked list of items from $\mathcal{I}_u^-$ that are of interests to the user.}
\end{definition}

\section{Proposed Model}
In this section, we first describe the process of performing a \textit{graph fourier transform} on a bipartite graph $\mathcal{B}$ for recommendations. Then we propose to place a novel spectral convolution filter on vertices (users and items) of the bipartite graph to dynamically filter the contributions of each frequency component in the \textit{spectral domain}. Later, a polynomial approximation is employed to overcome the shortcomings of the proposed convolution operation. Finally, with the approximate convolution operation, we introduce our final recommender system, named Spectral Collaborative Filtering, stacked by multiple spectral convolution layers.
\label{sec:model}
\subsection{Graph Fourier Transform}
\begin{definition}
\label{def::signal}
\textit{(Graph Signal). Given any graph $\mathcal{G}=\{\mathcal{V},\mathcal{E}\}$, where $\mathcal{V}$ and $\mathcal{E}$ are a vertex and an edge set, respectively, a \textit{graph signal} is defined as a state vector $\boldsymbol{x} \in \mathcal{R}^{|\mathcal{V}|\times 1}$ over all vertices in the graph, where $x_j$ is the $j_\textrm{th}$ value of $\boldsymbol{x}$ observed at the $j_\textrm{th}$ vertex of $\mathcal{G}$.}
\end{definition}

The classical \textit{fourier transform} is defined as an expansion of a function $f$ in terms of the complex exponentials as:
\begin{equation}
\label{eq::fourier}
\hat{f}(\xi)=\int_{-\infty}^{+\infty} f(x)e^{-2\pi \mathit{i}\xi}dx,
\end{equation}
where $\mathit{i}$ is an imaginary number, and the complex exponentials ($e^{-2\pi \mathit{i}\xi}$) form an orthonormal basis. 

Analogously, the \textit{graph fourier transform} is defined as an expansion of an observed \textit{graph signal} in terms of the eigenvectors of the graph laplacian $\boldsymbol{L}$, and the eigenvectors serve as a basis in the \textit{spectral domain}. Let us assume that a \textit{graph signal} ($\boldsymbol{x} \in \mathcal{R}^{|\mathcal{V}|\times 1}$) is observed on a graph $\mathcal{G}$, we define the \textit{graph fourier transform} and its inverse on $\mathcal{G}$ as:
\begin{equation}
\label{transform}
\hat{x}(l)=\sum_{j=0}^{N-1}x(j)\mu_l(j) \quad\text{ and }\quad x(j)=\sum_{l=0}^{N-1}\hat{x}(l)\mu_l(j),
\end{equation}
where $x(j)$, $\hat{x}(l)$ and $\mu_l(j)$ denote the $j_\textrm{th}$, $l_\textrm{th}$ and $j_\textrm{th}$ value of $\boldsymbol{x}$, $\hat{\boldsymbol{x}}$ and $\boldsymbol{\mu}_l$, respectively; $\boldsymbol{\mu}_l$ denotes the $l_\textrm{th}$ eigenvector of $\boldsymbol{L}$; $\hat{\boldsymbol{x}}$ represents a \textit{graph signal} which has been transformed into the \textit{spectral domain}. For simplicity, we rewrite Eq.~(\ref{transform}) in the matrix form as $\hat{\boldsymbol{x}}=\boldsymbol{U}^\intercal\boldsymbol{x}$ and $\boldsymbol{x}=\boldsymbol{U}\hat{\boldsymbol{x}}$, respectively, where $\boldsymbol{U} = \{\boldsymbol{\mu}_0,\boldsymbol{\mu}_1,...,\boldsymbol{\mu}_l,...,\boldsymbol{\mu}_{N-1}\}$ are eigenvectors of $\boldsymbol{L}$. 

In particular, for a bipartite graph $\mathcal{B}$, assume that there are two types of \textit{graph signals}: $\boldsymbol{x}^u\in\mathcal{R}^{|\mathcal{U}|\times 1}$ and $\boldsymbol{x}^i\in\mathcal{R}^{|\mathcal{I}|\times 1}$, associated with user and item vertices, respectively. We transform them into the \textit{spectral domain} and vice versa as :
\begin{equation}
\left[
\begin{array}{c}
\hat{\boldsymbol{x}}^u  \\
\hat{\boldsymbol{x}}^i 
\end{array}
\right]=\boldsymbol{U}^\intercal\left[
\begin{array}{c}
\boldsymbol{x}^u  \\
\boldsymbol{x}^i 
\end{array}
\right] \quad\text{and}\quad \left[
\begin{array}{c}
\boldsymbol{x}^u  \\
\boldsymbol{x}^i 
\end{array}
\right]=\boldsymbol{U}\left[
\begin{array}{c}
\hat{\boldsymbol{x}}^u  \\
\hat{\boldsymbol{x}}^i 
\end{array}
\right].
\end{equation}
\subsection{Spectral Convolution Filtering}
The broad information of graph structures exists in the \textit{spectral domain}, and different types of connectivity information between users and items can be uncovered in different frequency domains. It is desirable to dynamically adjust the importance of each frequency domain for RS.

To this end, we propose a convolution filter, parameterized by $\boldsymbol{\theta}\in \mathcal{R}^N$, as $g_{\boldsymbol{\theta}}(\boldsymbol{\Lambda})=diag([\theta_0\lambda_0,\theta_1\lambda_1,...,\theta_{N-1}\lambda_{N-1}])$ into the  \textit{spectral domain} as: 
\begin{equation}
\label{eq::filter}
\left[
\begin{array}{c}
\boldsymbol{x}^u_{new}  \\
\boldsymbol{x}^i_{new} 
\end{array}
\right]=\boldsymbol{U}g_{\boldsymbol{\theta}}(\boldsymbol{\Lambda})\left[
\begin{array}{c}
\hat{\boldsymbol{x}}^u  \\
\hat{\boldsymbol{x}}^i 
\end{array}
\right]=\boldsymbol{U}g_{\boldsymbol{\theta}}(\boldsymbol{\Lambda})\boldsymbol{U}^\intercal\left[
\begin{array}{c}
\boldsymbol{x}^u  \\
\boldsymbol{x}^i 
\end{array}
\right],
\end{equation}
where $\boldsymbol{x}^u_{new}$ and $\boldsymbol{x}^i_{new}$ are new \textit{graph signals} on $\mathcal{B}$ learned by the filter $g_{\boldsymbol{\theta}}(\boldsymbol{\Lambda})$, and $\boldsymbol{\Lambda}=\{\lambda_0,\lambda_1,...,\lambda_{N-1}\}$ denotes eigenvalues of the graph laplacian matrix $\mathbf{L}$. 

In Eq.~(\ref{eq::filter}), a convolution filter $g_{\boldsymbol{\theta}}(\boldsymbol{\Lambda})$ is placed on a spectral \textit{graph signal} $\left[
\begin{array}{c}
\hat{\boldsymbol{x}}^u  \\
\hat{\boldsymbol{x}}^i 
\end{array}
\right]$, and each value of $\boldsymbol{\theta}$ is responsible for boosting or diminishing each corresponding frequency component. The eigenvector matrix $\boldsymbol{U}$ in Eq.~(\ref{eq::filter}) is used to perform an inverse \textit{graph fourier transform}.
\subsection{Polynomial Approximation}
\label{sec::poly}
Recall that we proposed a convolution operation, as shown in Eq.~(\ref{eq::filter}), to directly perform in the \textit{spectral domain}. Although the filter is able to dynamically measure contributions of each frequency component for the purpose of recommendations, there are two limitations. First, as shown in Eq.~(\ref{eq::filter}), the learning complexity of the filter is $\mathcal{O}(N)$, where $N$ is the number of vertices. That is, unlike classical Convolutional Neural Networks (CNNs), the number of parameters of the filter is linear to the dimensionality of data. It constrains the scalability of the proposed filter. Second, the learned \textit{graph signals} ($\boldsymbol{x}^u_{new} \in \mathcal{R}^{|\mathcal{U}|\times 1}$ and $\boldsymbol{x}^i_{new} \in \mathcal{R}^{|\mathcal{I}|\times 1}$) are vectors. It means that each vertex of users or items is represented by a scalar feature. However, a vector for every user and item is necessary to model the deep and complex connections between users and items.
\begin{figure*}[t]
\label{fig:feedforward}
\includegraphics[width=0.9\textwidth,height=0.12\textheight]{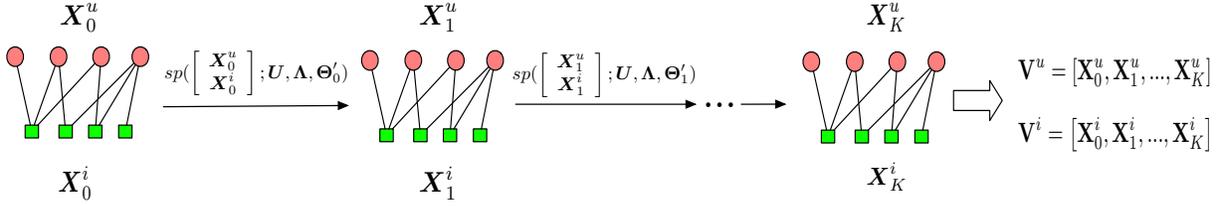}
\vspace{-1em}
\caption{The feed-forward procedure of SpectralCF. The function $sp(:;\boldsymbol{U},\boldsymbol{\Lambda},\boldsymbol{\Theta})$ denotes the spectral convolution operation shown in Eq.~(\ref{eq::poly_filter2}).}
\vspace{-1em}
\end{figure*}

The first limitation can be overcome by using a polynomial approximation. We first demonstrate that the set of all convolution filters $\mathcal{S}_g=\{g_{\boldsymbol{\theta}}(\boldsymbol{\Lambda})=diag([\theta_0\lambda_0,\theta_1\lambda_1,...,\theta_{N-1}\lambda_{N-1}]),\boldsymbol{\theta}\in \mathcal{R}^N\}$ is equal to the set of finite-order polynomials $\mathcal{S}_h=\{h_{\boldsymbol{\theta'}}(\boldsymbol{\Lambda})={\sum\limits_{p=0}^{N-1}\theta'_p\boldsymbol{\Lambda}^p},\boldsymbol{\theta'}\in \mathcal{R}^N\}$.
\begin{prop}
\label{pro::polynomials}
$\mathcal{S}_h$ is equal to $\mathcal{S}_g$.
\end{prop}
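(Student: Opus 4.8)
The plan is to prove the two set inclusions $\mathcal{S}_g \subseteq \mathcal{S}_h$ and $\mathcal{S}_h \subseteq \mathcal{S}_g$ separately. Since every element of either family is a diagonal matrix, it suffices to compare the diagonal vectors each family can realize. Writing $\boldsymbol{\Lambda} = \mathrm{diag}(\lambda_0, \ldots, \lambda_{N-1})$, the diagonal of $g_{\boldsymbol{\theta}}(\boldsymbol{\Lambda})$ is $(\theta_0\lambda_0, \ldots, \theta_{N-1}\lambda_{N-1})$, while the $j_\textrm{th}$ diagonal entry of $h_{\boldsymbol{\theta'}}(\boldsymbol{\Lambda})$ is the polynomial evaluation $\sum_{p=0}^{N-1}\theta'_p \lambda_j^p =: P_{\boldsymbol{\theta'}}(\lambda_j)$. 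The whole question therefore reduces to determining which value-vectors $(P(\lambda_0), \ldots, P(\lambda_{N-1}))$ are attainable by polynomials of degree at most $N-1$, and this is naturally governed by the Vandermonde matrix $\boldsymbol{V}$ with entries $V_{jp} = \lambda_j^p$, since the map $\boldsymbol{\theta'} \mapsto \mathrm{diag}(h_{\boldsymbol{\theta'}}(\boldsymbol{\Lambda}))$ is precisely $\boldsymbol{\theta'}\mapsto \boldsymbol{V}\boldsymbol{\theta'}$.

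For the inclusion $\mathcal{S}_g \subseteq \mathcal{S}_h$, I would fix $\boldsymbol{\theta}$ and seek a coefficient vector $\boldsymbol{\theta'}$ with $P_{\boldsymbol{\theta'}}(\lambda_j) = \theta_j\lambda_j$ for every $j$; this is a polynomial-interpolation problem through the $N$ points $(\lambda_j, \theta_j\lambda_j)$, solvable in closed form by Lagrange interpolation, or equivalently by setting $\boldsymbol{\theta'} = \boldsymbol{V}^{-1}(\theta_0\lambda_0, \ldots, \theta_{N-1}\lambda_{N-1})^\intercal$. For the reverse inclusion $\mathcal{S}_h \subseteq \mathcal{S}_g$, I would fix $\boldsymbol{\theta'}$, compute the target diagonal $c_j = P_{\boldsymbol{\theta'}}(\lambda_j)$, and recover $\boldsymbol{\theta}$ coordinatewise through $\theta_j = c_j/\lambda_j$, so that $\theta_j\lambda_j = c_j$ reproduces the same diagonal matrix.

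Both directions hinge on two spectral assumptions, and this is where I expect the real difficulty to sit. The interpolation step requires the $\lambda_j$ to be \emph{distinct}, so that $\boldsymbol{V}$ is invertible; with repeated eigenvalues the polynomial family can only produce diagonals that are constant across each repeated value, so one would have to argue that $\mathcal{S}_g$ obeys the same constraint or else restrict attention to a simple spectrum. The coordinatewise division $\theta_j = c_j/\lambda_j$ additionally requires $\lambda_j \neq 0$, which is genuinely delicate: the random walk laplacian of Definition~\ref{def::lap} always carries a zero eigenvalue, and at that coordinate $g_{\boldsymbol{\theta}}$ is forced to vanish while a polynomial can take the nonzero value $\theta'_0$. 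I would therefore treat the handling of the zero (and any repeated) eigenvalues as the crux of the argument, and I would want to verify whether the intended statement tacitly assumes a distinct, nonzero spectrum, or whether the equality is really meant on the subspace where the two families coincide.
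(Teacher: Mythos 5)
Your plan coincides, step for step, with the paper's own proof. For $\mathcal{S}_h \subseteq \mathcal{S}_g$ the paper rewrites the $j_\textrm{th}$ diagonal entry $\sum_{p=0}^{N-1}\theta'_p\lambda_j^{p}$ as $\left(\sum_{p=0}^{N-1}\theta'_p\lambda_j^{p-1}\right)\cdot\lambda_j$, which is exactly your coordinatewise division $\theta_j = c_j/\lambda_j$; for $\mathcal{S}_g \subseteq \mathcal{S}_h$ it asserts the existence of a polynomial of degree at most $N-1$ interpolating the $N$ pairs $(\lambda_j,\theta_j\lambda_j)$, which is your Lagrange/Vandermonde step. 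So as a reconstruction of the intended argument, your proposal is on target.

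Where you depart from the paper is that you refuse to treat these two steps as unconditional --- and you are right not to. Both caveats you raise are genuine gaps in the paper's proof, not difficulties you failed to overcome. The factorization used for $\mathcal{S}_h\subseteq\mathcal{S}_g$ involves $\lambda_j^{-1}$ in the $p=0$ term and is meaningless at $\lambda_j=0$; since the random walk laplacian $\boldsymbol{L}=\boldsymbol{I}-\boldsymbol{D}^{-1}\boldsymbol{A}$ always has eigenvalue $0$ (with the all-ones eigenvector), every filter in $\mathcal{S}_g$ is forced to vanish at that coordinate, whereas a polynomial filter takes the value $\theta'_0$ there; hence $\mathcal{S}_h\subseteq\mathcal{S}_g$ fails whenever $\theta'_0\neq 0$. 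In the opposite direction, the interpolation claim silently requires the eigenvalues to be distinct: if $\lambda_i=\lambda_j\neq 0$ but $\theta_i\neq\theta_j$, the pairs $(\lambda_i,\theta_i\lambda_i)$ and $(\lambda_j,\theta_j\lambda_j)$ place two different values over the same abscissa, and no function --- polynomial or otherwise --- interpolates them. Repeated eigenvalues are the rule rather than the exception for these laplacians: for instance, whenever $|\mathcal{U}|\neq|\mathcal{I}|$, a rank argument on $\boldsymbol{A}$ shows the eigenvalue $1$ of $\boldsymbol{L}$ has multiplicity at least $\bigl||\mathcal{U}|-|\mathcal{I}|\bigr|$. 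So the proposition as literally stated is false for the very laplacian the paper defines, and it becomes true exactly under the tacit hypotheses you name: a simple, nonzero spectrum. Your instinct to treat this as the crux, rather than as a detail to wave through, is the correct mathematical judgment; the paper's proof does not address it at all.
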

\begin{proof}
Let us consider an instance $h_{\boldsymbol{\theta'}}(\boldsymbol{\Lambda}) \in \mathcal{S}_h$. Then, $h_{\boldsymbol{\theta'}}(\boldsymbol{\Lambda})={\sum\limits_{p=0}^{N-1}\theta'_p\boldsymbol{\Lambda}^p}=diag([\sum\limits_{p=0}^{N-1}\theta'_p\lambda_0^{p-1}\cdot \lambda_{0},\sum\limits_{p=0}^{N-1}\theta'_p\lambda_1^{p-1}\cdot \lambda_{1},...,\sum\limits_{p=0}^{N-1}\theta'_p\lambda_{N-1}^{p-1}\cdot \lambda_{N-1}])$. So, $h_{\boldsymbol{\theta'}}(\boldsymbol{\Lambda}) \in  \mathcal{S}_g$. Now, consider a convolution filter $g_{\boldsymbol{\theta}}(\boldsymbol{\Lambda})\in \mathcal{S}_g$. Then, there must exist a polynomial function $\phi(\lambda)=\sum\limits_{p=0}^{N-1}a_p\lambda^p$ that interpolates through all pairs $(\lambda_i, \theta_i \lambda_i)$ for $i\in \{0,1,...,N-1\}$. The maximum degree of such a polynomial is at most $N-1$ as there are maximum $N$ points to interpolate. Therefore, $g_{\boldsymbol{\theta}}(\boldsymbol{\Lambda})={\sum\limits_{p=0}^{N-1}a_p\boldsymbol{\Lambda}^p}=h_{\boldsymbol{a}}(\boldsymbol{\Lambda})\in \mathcal{S}_h$.
\end{proof}

Now, we can approximate the convolution filters by using first $P$ polynomials as the following:
\begin{eqnarray}
\label{poly}
g_{\boldsymbol{\theta}}(\boldsymbol{\Lambda}) \approx \sum_{p=0}^P\theta'_p\boldsymbol{\Lambda}^p.
\end{eqnarray}

In this way, the learning complexity of the filter becomes $\mathcal{O}(P)$, where $P$ is a hyper-parameter, and independent from the number vertices. Specially, we limit the order of the polynomial, $P$, to 1 in order to avoid over-fitting. By substituting Eq.~(\ref{poly}) into Eq.~(\ref{eq::filter}), we have:
\begin{equation}
\label{eq::poly_filter}
\left[
\begin{array}{c}
\boldsymbol{x}^u_{new}  \\
\boldsymbol{x}^i_{new} 
\end{array}
\right]=(\theta'_0\boldsymbol{U}\boldsymbol{U}^\intercal+\theta'_1\boldsymbol{U}\boldsymbol{\Lambda}\boldsymbol{U}^\intercal)\left[
\begin{array}{c}
\boldsymbol{x}^u  \\
\boldsymbol{x}^i 
\end{array}
\right].
\end{equation}
Furthermore, it is beneficial to further decrease the number of parameters by setting $\theta'=\theta'_0=\theta'_1$. As a result, Eq.~(\ref{eq::poly_filter}) becomes:
\begin{equation}
\label{eq::poly_filter1}
\left[
\begin{array}{c}
\boldsymbol{x}^u_{new}  \\
\boldsymbol{x}^i_{new} 
\end{array}
\right]=\theta'(\boldsymbol{U}\boldsymbol{U}^\intercal+\boldsymbol{U}\boldsymbol{\Lambda}\boldsymbol{U}^\intercal)\left[
\begin{array}{c}
\boldsymbol{x}^u  \\
\boldsymbol{x}^i 
\end{array}
\right],
\end{equation}
where $\theta'$ is a scalar.

For the second limitation, one can generalize the \textit{graph signals} ($\boldsymbol{x}^u \in \mathcal{R}^{|\mathcal{U}| \times 1}$ and $\boldsymbol{x}^i \in \mathcal{R}^{|\mathcal{I}|\times 1}$) to $C$-dimensional \textit{graph signals}: $\boldsymbol{X}^u \in \mathcal{R}^{|\mathcal{U}|\times C}$ and $\boldsymbol{X}^i \in \mathcal{R}^{|\mathcal{I}|\times C}$. Hence, Eq.~(\ref{eq::poly_filter1}) becomes $\left[
\begin{array}{c}
\boldsymbol{X}^u_{new}  \\
\boldsymbol{X}^i_{new} 
\end{array}
\right]=(\boldsymbol{U}\boldsymbol{U}^\intercal+\boldsymbol{U}\boldsymbol{\Lambda}\boldsymbol{U}^\intercal)\left[
\begin{array}{c}
\boldsymbol{X}^u  \\
\boldsymbol{X}^i 
\end{array}
\right]\theta'$. To take one step further, we generalize the filter parameter $\theta'$ to a matrix of filter parameters $\boldsymbol{\Theta}' \in \mathcal{R}^{C \times F}$ with $C$ input channels and $F$ filters. As a result, our final spectral convolution operation is shown as the following:
\begin{equation}
\label{eq::poly_filter2}
\left[
\begin{array}{c}
\boldsymbol{X}^u_{new}  \\
\boldsymbol{X}^i_{new} 
\end{array}
\right]=\sigma \left(\left(\boldsymbol{U}\boldsymbol{U}^\intercal+\boldsymbol{U}\boldsymbol{\Lambda}\boldsymbol{U}^\intercal\right)\left[
\begin{array}{c}
\boldsymbol{X}^u  \\
\boldsymbol{X}^i 
\end{array}
\right]\boldsymbol{\Theta}'\right),
\end{equation}
where $\boldsymbol{X}^u_{new} \in \mathcal{R}^{|\mathcal{U}| \times F}$ and $\boldsymbol{X}^i_{new}\in \mathcal{R}^{|\mathcal{I}| \times F}$ denote convolution results learned with $F$ filters from the \textit{spectral domain} for users and items, respectively; $\sigma$ denotes the logistic sigmoid function.

In fact, Eq.~(\ref{eq::poly_filter2}) is a general version of Eq.~(\ref{eq::poly_filter1}) as it is equivalent to perform Eq.~(\ref{eq::poly_filter1}) in $C$ input channels with $F$ filters. Hereafter, the proposed convolution operation as shown in Eq.~(\ref{eq::poly_filter2}) is denoted as a function $sp(:;\boldsymbol{U},\boldsymbol{\Lambda},\boldsymbol{\Theta}')$, which is parameterized by $\boldsymbol{U},\boldsymbol{\Lambda}$ and $\boldsymbol{\Theta}'$.
\subsection{Multi-layer Model}
Given user vectors $\boldsymbol{X}^u$ and item vectors $\boldsymbol{X}^i$, new \textit{graph singals} ($\boldsymbol{X}^u_{new}$ and $\boldsymbol{X}^i_{new}$) in Eq.~(\ref{eq::poly_filter2}) are convolution results learned from the \textit{spectral domain} with a parameter matrix $\boldsymbol{\Theta}' \in \mathcal{R}^{C \times F}$. As in classical CNNs, one can regard Eq.~(\ref{eq::poly_filter2}) as a propagation rule to build a deep neural feed-forward network based model, which we refer as Spectral Collaborative Filtering (SpectralCF).

Similar to word embedding techniques, we first randomly initialize user vectors $\boldsymbol{X}^u_0$ and item vectors $\boldsymbol{X}^i_0$. Taking $\boldsymbol{X}^u_0$ and $\boldsymbol{X}^i_0$ as inputs, a $K$ layered deep spectralCF can be formulated as:
\begin{equation}
\label{eq::poly_filter3}
\left[
\begin{array}{c}
\boldsymbol{X}^u_{K}  \\
\boldsymbol{X}^i_{K} 
\end{array}
\right]=\underbrace{sp\Big(...sp\big(}_{K}\left[
\begin{array}{c}
\boldsymbol{X}^u_0  \\
\boldsymbol{X}^i_0 
\end{array}
\right];\boldsymbol{U},\boldsymbol{\Lambda},\boldsymbol{\Theta}'_0\big)...;\boldsymbol{U},\boldsymbol{\Lambda},\boldsymbol{\Theta}'_{K-1}\Big),
\end{equation}
where $\boldsymbol{\Theta}'_{K-1}\in\mathcal{R}^{F\times F}$ is a matrix of filter parameters for the $k_{\textrm{th}}$ layer; $\boldsymbol{X}^u_{k}$ and $\boldsymbol{X}^i_{k}$ denote the convolution filtering results of the $k_\textrm{th}$ layer.

In order to utilize features from all layers of SpectralCF, we further concatenate them into our final latent factors of users and items as:
\begin{equation}
\label{eq:concat}
\mathbf{V}^u = \left[\mathbf{X}^u_0,\mathbf{X}^u_1,...,\mathbf{X}^u_K\right]\quad\text{ and }\quad
\mathbf{V}^i = \left[\mathbf{X}^i_0,\mathbf{X}^i_1,...,\mathbf{X}^i_K\right],
\end{equation}
where $\mathbf{V}^u \in \mathcal{R}^{|\mathcal{U}|\times\left(C+KF\right)}$ and $\mathbf{V}^i \in \mathcal{R}^{|\mathcal{I}|\times\left(C+KF\right)}$.

In terms of the loss function, the conventional BPR loss suggested in \cite{rendle2009bpr} is employed. BPR is a pair-wise loss to address the implicit data for recommendations. Unlike point-wise based methods \cite{koren2008factorization}, BPR learns a triple $(r,j,j')$, where item $j$ is liked/clicked/viewed by user $r$ and item $j'$ is not. By maximizing the preference difference between $j$ and $j'$, BPR assumes that the user $i$ prefers item $j$ over the unobserved item $j'$. In particular, given a user matrix $\boldsymbol{V}^u$ and an item matrix $\boldsymbol{V}^i$ as shown in Eq.~(\ref{eq:concat}), the loss function of SpectralCF is given as: 
\begin{eqnarray}
\mathcal{L}=\text{arg }\underset{\boldsymbol{V}^u,\boldsymbol{V}^i}{\text{min}}  \sum_{(r,j,j') \in \mathcal{D}} -\text{ln } \sigma({\boldsymbol{v}^u_r}^\intercal\boldsymbol{v}^i_j - {\boldsymbol{v}^u_r}^\intercal\boldsymbol{v}^i_{j'}) \\ \nonumber
+ \lambda_{reg}(||\boldsymbol{V}^u||^2_2 + ||\boldsymbol{V}^i||^2_2),
\end{eqnarray}
where $\boldsymbol{v}^u_r$ and $\boldsymbol{v}^i_j$ denote $r_\textrm{th}$ and $j_\textrm{th}$ column of $\boldsymbol{V}^u$ and $\boldsymbol{V}^i$, respectively; $\lambda_{reg}$ represents the weight on the regularization terms. The training data $\mathcal{D}$ is generated as:
\begin{equation}
\mathcal{D} = \{(r,j,j')|r \in \mathcal{U} \wedge j \in \mathcal{I}_i^+ \wedge j' \in \mathcal{I}_i^-\}.
\end{equation}
\subsection{Optimization and Prediction} 
At last, RMSprop \cite{tieleman2012lecture} is used to minimize the loss function. The RMSprop is an adaptive version of gradient descent which adaptively controls the step size with respect to the absolute value of the gradient. It is done by scaling the updated value of each weight by a running average of its gradient norm. 

As shown in Algorithm \ref{alg}, for a batch of randomly sampled triple $(r,j,j')$, we update parameters in each epoch using the gradients of the loss function. After the training process, with optimized $\boldsymbol{\Theta}$, $\boldsymbol{X}^u_0$ and $\boldsymbol{X}^i_0$, we derive the user $r$'s preference over item $j$ as ${\boldsymbol{v}^u_r}^\intercal\boldsymbol{v}^i_j$. The final item recommendation for a user $r$ is given according to the ranking criterion as Eq.~(\ref{ranking}).
\begin{equation}
\label{ranking}
r: j_1 \succcurlyeq j_2 \succcurlyeq ... \succcurlyeq j_n \Rightarrow {\boldsymbol{v}^u_r}^\intercal\boldsymbol{v}^i_{j_1} > {\boldsymbol{v}^u_r}^\intercal\boldsymbol{v}^i_{j_2} > ... > {\boldsymbol{v}^u_r}^\intercal\boldsymbol{v}^i_{j_n}.
\end{equation}

\section{Experiments}
\label{sec:exp}
\begin{algorithm}[t]
 \caption{SpectralCF}
 \label{alg}
\SetAlgoLined {\small
\KwIn{Training set: $\mathcal{D} := \{(r,j,j')|r \in \mathcal{U} \wedge j \in \mathcal{I}^+_i \wedge j' \subseteq \mathcal{I}^-_i\}$,
number of epochs $E$,
batch size $B$,
number of layers $K$, dimension of latent factors $C$, number of filters $F$, regularization term $\lambda_{reg}$, learning rate $\lambda$, laplacian matrix $\boldsymbol{L}$ and its corresponding eigenvectors $\boldsymbol{U}$ and eigenvalues $\boldsymbol{\Lambda}$.}
\KwOut{Model's parameter set: $\boldsymbol{\Psi} =\{\boldsymbol{\Theta}'_0,\boldsymbol{\Theta}'_1,...,\boldsymbol{\Theta}'_{K-1},\boldsymbol{X}^u_0,\boldsymbol{X}^i_0$\}.}
Randomly initialize $\mathbf{X}^u_0$ and $\mathbf{X}^i_0$ from a Gaussian distribution $\mathcal{N}(0.01,0.02)$;\\
\For{${e}=1,2,\cdots,E$}{
	Generate the $e_\textrm{th}$ batch of size $B$ by uniformly sampling from $\mathcal{U}$, $\mathcal{I}_i^+$ and $\mathcal{I}_i^-$;\\
	\For{${k}=0,1,\cdots,K-1$}
    {
    Calculate $\boldsymbol{X}^u_{k+1}$ and $\boldsymbol{X}^i_{k+1}$ by using Eq.~(\ref{eq::poly_filter2});
    }
Concatenate $[\boldsymbol{X}^u_{0},\boldsymbol{X}^u_{1},...,\boldsymbol{X}^u_{K}]$ into $\boldsymbol{V}^u$ and $[\boldsymbol{X}^i_{0},\boldsymbol{X}^i_{1},...,\boldsymbol{X}^i_{K}]$ into $\boldsymbol{V}^i$;\\
Estimate gradients $\frac{{\partial \mathcal{L}^{}}}{{\partial \mathbf{\Psi}_{e} }}$ by back propagation;\\
Update $\boldsymbol{\Psi}_{e+1}$ according to the procedure of RMSprop optimization \cite{tieleman2012lecture};
}
\KwRet{$\mathbf{\Psi}_E$}.}
\end{algorithm}
As discussed in the introduction section, leveraging the connectivity information in a user-item bipartite graph is essentially important for an effective recommendation model. In this section, we argue that, directly learning from the \textit{spectral domain}, the proposed SpectraCF can reveal the rich information of graph structures existing in the \textit{spectral domain} for making better recommendations. One may ask the following research questions:  
\begin{itemize}[leftmargin=0cm]
\item[] \textbf{RQ1}: How much does SpectralCF benefit from the connectivity information learned from the \textit{spectral domain}? 
\item[] \textbf{RQ2}: Does SpectralCF learn from the \textit{spectral domain} in an effective way?
\item[] \textbf{RQ3}: Compared with traditional methods, can SpectralCF better counter the \textit{cold-start} problem?  
\end{itemize}
In this section, in order to answer the questions above, we conduct experiments to compare SpectralCF with state-of-the-art models.
\subsection{Comparative Methods}
\label{sec:baseline}
To validate the effectiveness of SpectralCF, we compare it with six state-of-the-art models. The comparative models can be categorized into two groups: (1) \textbf{CF-based Models:} To answer \textbf{RQ1}, we compare SpectralCF with four state-of-the-art CF-based methods (ItemKNN, BPR, eALS and NCF) which ignore the information in the \textit{spectral domain}; (2) \textbf{Graph-based Models}: For \textbf{RQ2}, we are interested in how effectively does SpetralCF learn the connectivity information from the \textit{spectral domain}. We therefore compare SpectralCF with two graph-based models: GNMF and GCMC. Although the two models are also CF-based, we term them as graph-based models since they learn the structural information from a bipartite graph. These two groups of comparative models are summarized below:
\begin{itemize}
\item \textbf{ItemKNN} \cite{sarwar2001item}: ItemKNN is a standard neighbor-based collaborative filtering method. The model finds similar items for a user based on their similarities.
\item \textbf{BPR} \cite{rendle2009bpr}:  We use \textbf{B}ayesian \textbf{P}ersonalized \textbf{R}anking based Matrix Factorization. BPR introduces a pair-wise loss into the Matrix Factorization to be optimized for ranking \cite{Gantner2011MyMediaLite}.
\item \textbf{eALS} \cite{he2016fast}: This is a state-of-the-art matrix factorization based method for
item recommendation. This model takes all unobserved interactions as negative instances and weighting them non-uniformly by the item popularity.
\item \textbf{NCF} \cite{neuralCF}: \textbf{N}eural \textbf{C}ollaborative \textbf{F}iltering fuses matrix factorization and Multi-Layer Perceptron (MLP) to learn from user-item interactions. The MLP endows
NCF with the ability of modelling non-linearities between users and items.
\item \textbf{GNMF} \cite{cai2008non}: \textbf{G}raph regularized \textbf{N}on-negative \textbf{M}atrix \textbf{F}acto-rization considers the graph structures by seeking a matrix factorization with a graph-based regularization.

\item \textbf{GCMC} \cite{berg2017graph}: \textbf{G}raph \textbf{C}onvolutional \textbf{M}atrix \textbf{C}ompletion utilizes a graph auto-encoder to learn the connectivity information of a bipartite interaction graph for latent factors of users and items. 
\end{itemize}
Please note that, GNMF and GCMC are originally designed for explicit datasets. For a fair comparison, we follow the setting of \cite{hu2008collaborative} to adapt them for implicit data.
\subsection{Datasets}
\begin{figure}[t]
\centering
\includegraphics[width=0.23\textwidth,height=.15\textheight]{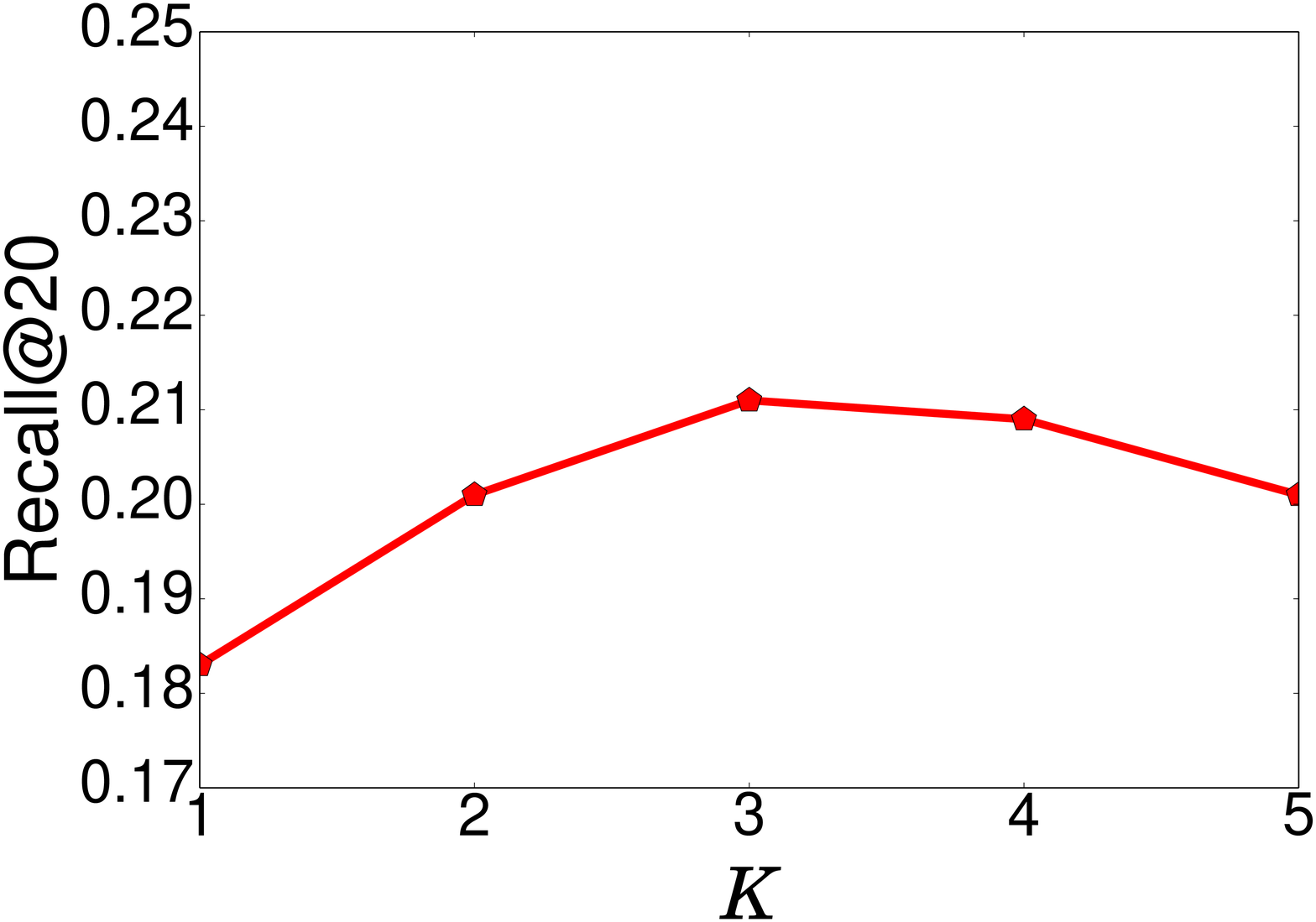}
\includegraphics[width=0.23\textwidth,height=.15\textheight]{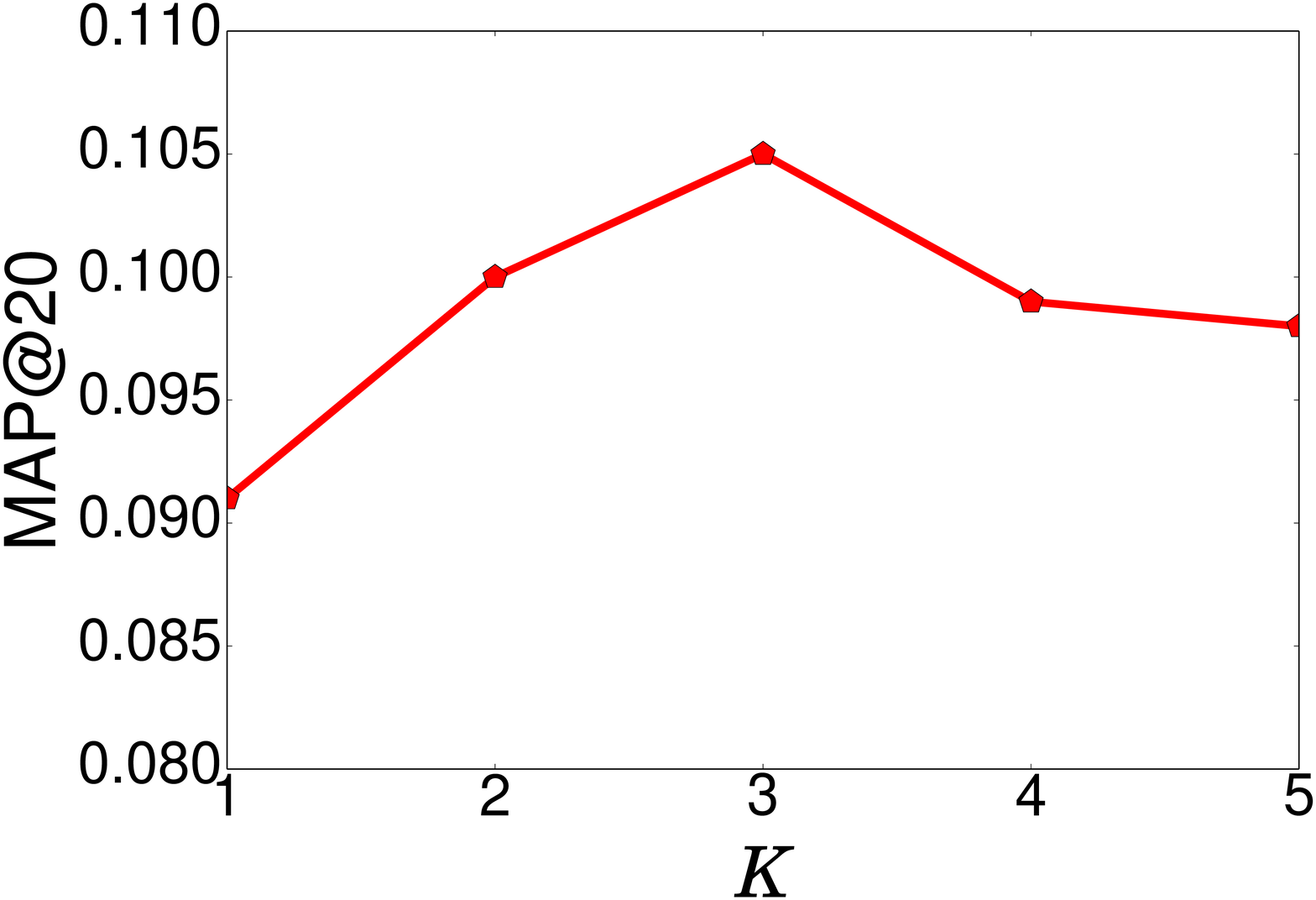}
\vspace{-1em}
\caption{Effects of hyper-parameter $K$ in terms of Recall@20 and MAP@20 in the dataset of \textit{MovieLens-1M}.}
\vspace{-1em}
\label{fig:：parameter}
\end{figure} 
\begin{table}[t]
\centering
\caption{The hyper-parameter setting of SpectralCF.}
\vspace{-1em}
\label{table::setting}
\begin{tabular}{|c||M{0.15cm}|M{0.15cm}|M{0.15cm}|c|c|c|c|}
\hline
Hyper-parameters &$K$&$C$&$F$ &$\lambda_{reg}$&$B$&$E$&$\lambda$\\
\hline
Values & 3&$16$&$16$ &0.001&$1,024$&$200$&0.001\\
\hline
\end{tabular}
\vspace{-1em}
\end{table}
We test our method as well as comparative models on three publicly available datasets\footnote{\textit{MovieLens-1M} and \textit{HetRec} are available at \url{https://grouplens.org/datasets/}; and \textit{Amazon Instant Video} can be found at \url{http://jmcauley.ucsd.edu/data/amazon/}}: 
\begin{itemize}
\item \textbf{\textit{MovieLens-1M}} \cite{harper2016movielens}: This movie rating dataset has been
widely used to evaluate collaborative filtering algorithms. We used the version containing 1,000,209 ratings from 6,040 users for 3,900 movies. While it is a dataset with explicit feedbacks, we follow the convention \cite{neuralCF} that transforms it into implicit data, where each entry is marked as 0 or 1 indicating whether the user has rated the item. After transforming, we retain a dataset of $\mathbf{1.0\%}$ density.
\item \textit{\textbf{HetRec}} \cite{Cantador:RecSys2011}: This dataset has been released by the Second International Workshop on Information Heterogeneity and Fusion in Recommender Systems\footnote{http://ir.ii.uam.es/hetrec2011/}. It is an extension of \textit{MovieLens-10M} dataset and contains 855,598 ratings, 2,113 users and 10,197 movies. After converting it into implicit data as \textit{MovieLens-1M}, we obtain a dataset of $\mathbf{0.3\%}$ density.
\item \textbf{\textit{Amazon Instant Video}} \cite{mcauley2015image}: The dataset consists of 426,922 users, 23,965 videos and 583,933 ratings from \textit{Amazon.com}. Similarly, we transformed it into implicit data and removed users with less than 5 interactions. As a result, a dataset of \textbf{0.12\%} density is obtained.
\end{itemize}
\subsection{Experimental Setting}
Ideally, a recommendation model should not only be able to retrieve all relevant items out of all items but also provide a rank for each user where relevant items are expected to be ranked in the top. Therefore, in our experiments, we use Recall@M and MAP@M to evaluate the performance of the top-M recommendations. Recall@M is employed to measure the fraction of relevant items retrieved out of all relevant items. MAP@M is used for evaluating the ranking performance of RS. The Recall@M for each user is then defined as:
\begin{equation}
\text{Recall@M} = \frac{\text{\#items\; the\; user\; likes\; among\; the\; top\; M}}{\text{total\; number\; of\; items\; the\; user\; likes}}.
\end{equation}
The final results reported are average recall over all users. 

For each dataset, we randomly select 80\% items associated with each user to constitute the training set and use all the remaining as the test set. For each evaluation scenario, we repeat the evaluation five times with different randomly selected training sets and the average performance is reported in the following sections. 

We use a validation set from the training set of each dataset to find the optimal hyper-parameters of comparative methods introduced in the Section \ref{sec:baseline}. For ItemKNN, we employ the cosine distance to measure item similarities. The dimensions of latent factors for BPR, eALS and GNMF are searched from \{8,16,32,64,128\} via the validation set. The hyperparameter $\lambda$ of eALS is selected from 0.001 to 0.04. Since the architecture of a multi-layer perceptron (MLP) is difficult to optimize, we follow the suggestion from the original paper \cite{neuralCF} to employ a three-layer MLP with the shape of $(32,16,8)$ for NCF. The dropout rate of nodes for GCMC is searched from \{0.3,0.4,0.5,0.6,0.7,0.8\}. Our SpectralCF has one essential hyper-parameter: $K$. Figure \ref{fig:：parameter} shows how the performances of SpectralCF vary as $K$ is set from 1 to 5 on the validation set of \textit{MovieLens-1M}. As we can see, in terms of Recall@20 and MAP@20, SpectralCF reaches its best performances when $K$ is fixed as 3. Other hyper-parameters of SpectralCF are empirically set and summarized in Table \ref{table::setting}, where $\lambda$ denotes the learning rate of RMSprop. 
Our models are implemented in \textit{TensorFlow} \cite{abadi2016tensorflow}. 
\subsection{Experimental Results (RQ1 and RQ2)}
\begin{figure*}[t]
\centering
\begin{center}
\begin{subfigure}{0.33\textwidth}
\centering
\includegraphics[height=.17\textheight]{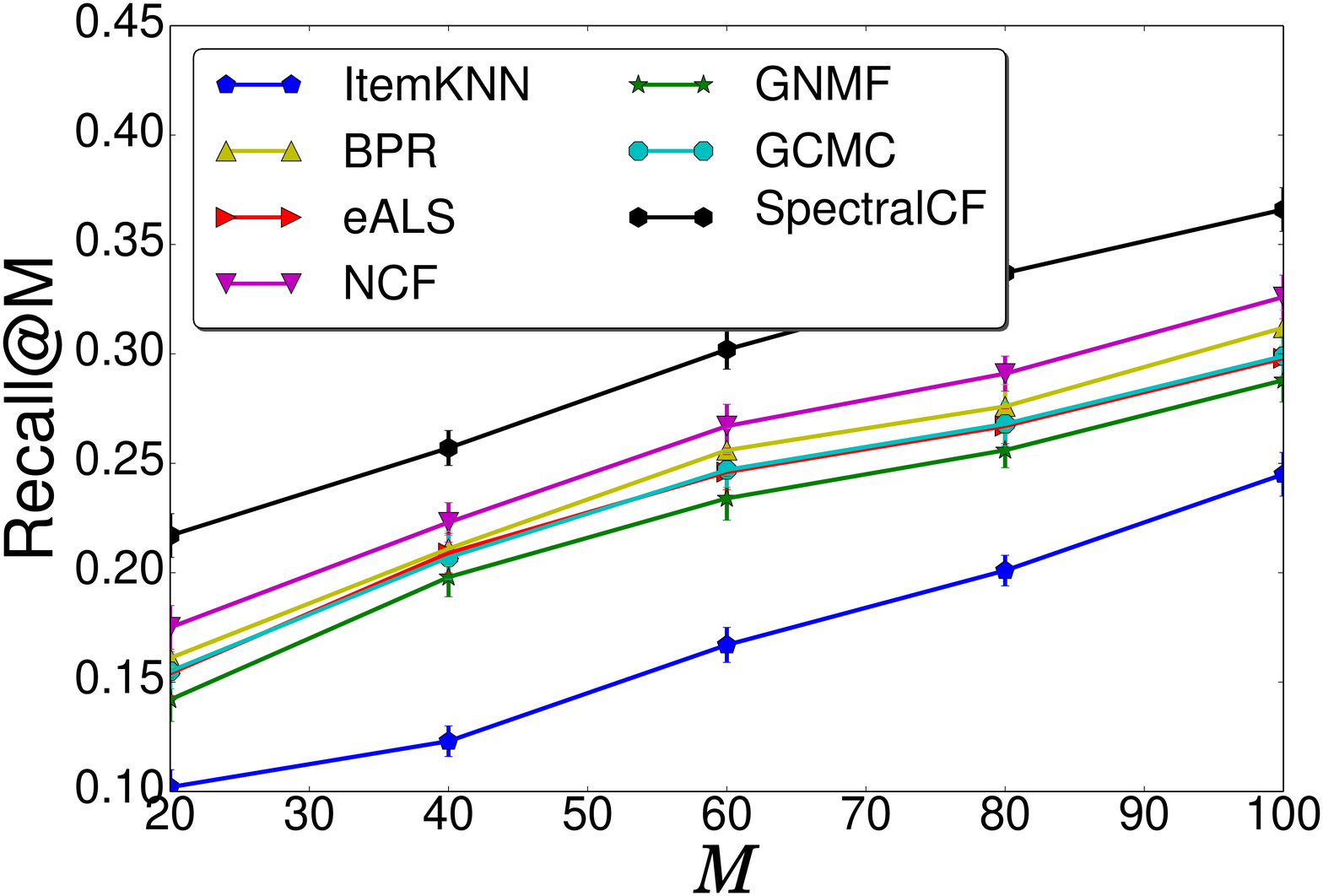}
\caption{MovieLens-1M}
\end{subfigure}
\begin{subfigure}{0.33\textwidth}
\centering
\includegraphics[height=.17\textheight]{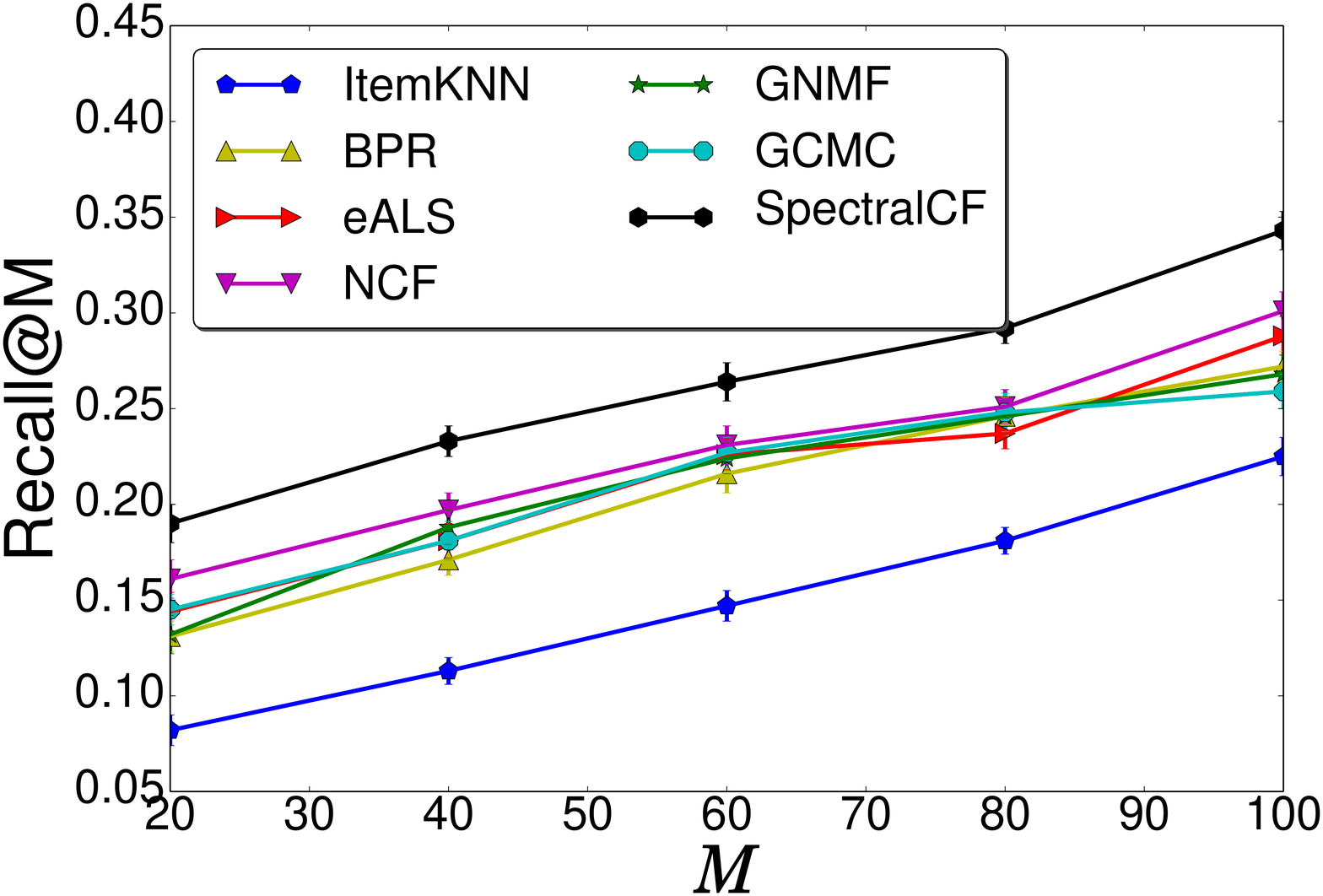}
\caption{HetRec}
\end{subfigure}
\begin{subfigure}{0.33\textwidth}
\centering
\includegraphics[height=.17\textheight]{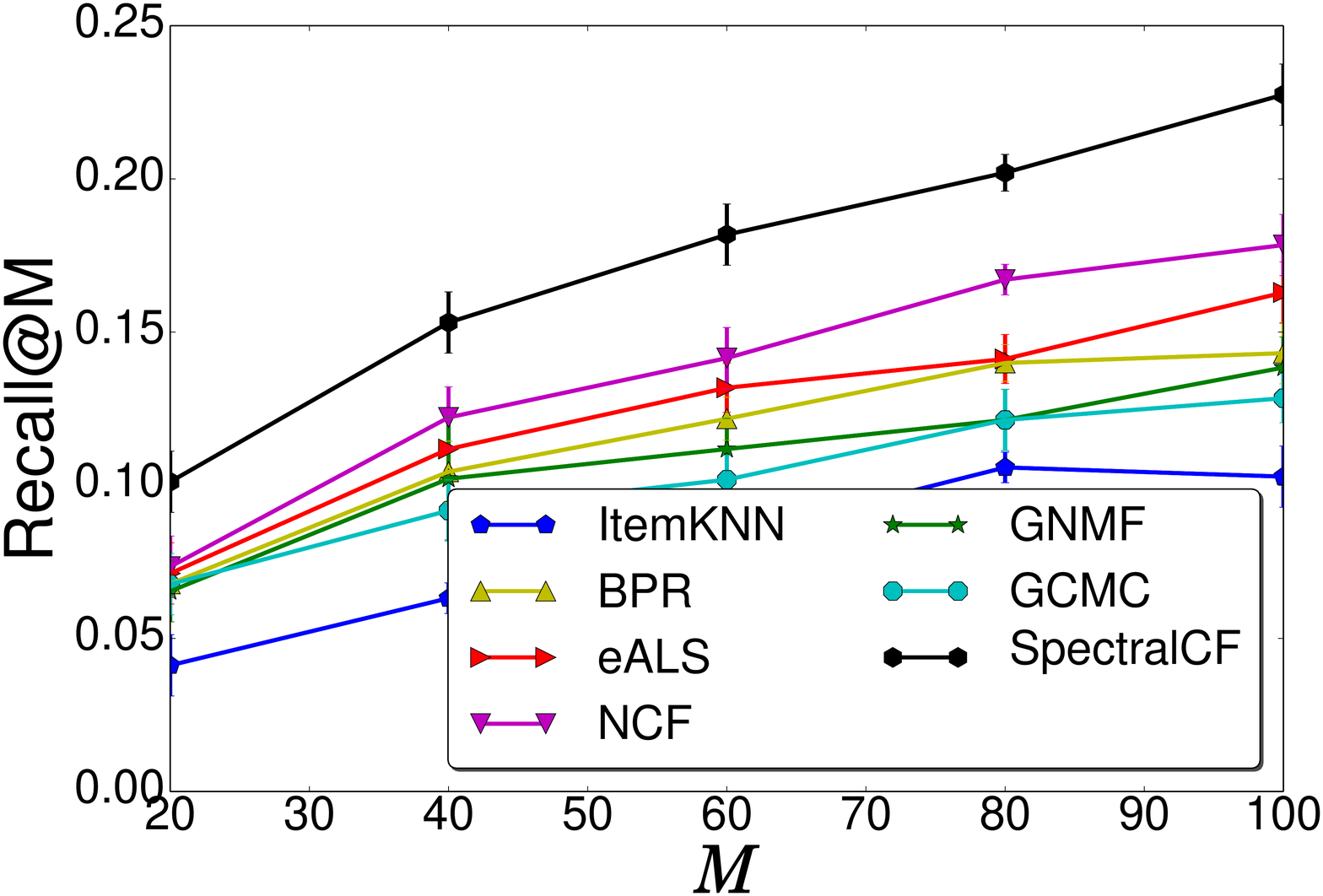}
\caption{Amazon Instant Video}
\end{subfigure}
\end{center}
\vspace{-1em}
\caption{Performance comparison in terms of recall@M with M varied from 20 to 100. Errors bars are 1-standard deviation.}
\label{exp::recall}
\end{figure*}
\begin{figure*}[t]
\centering
\begin{center}
\begin{subfigure}{0.33\textwidth}
\centering
\includegraphics[height=.17\textheight]{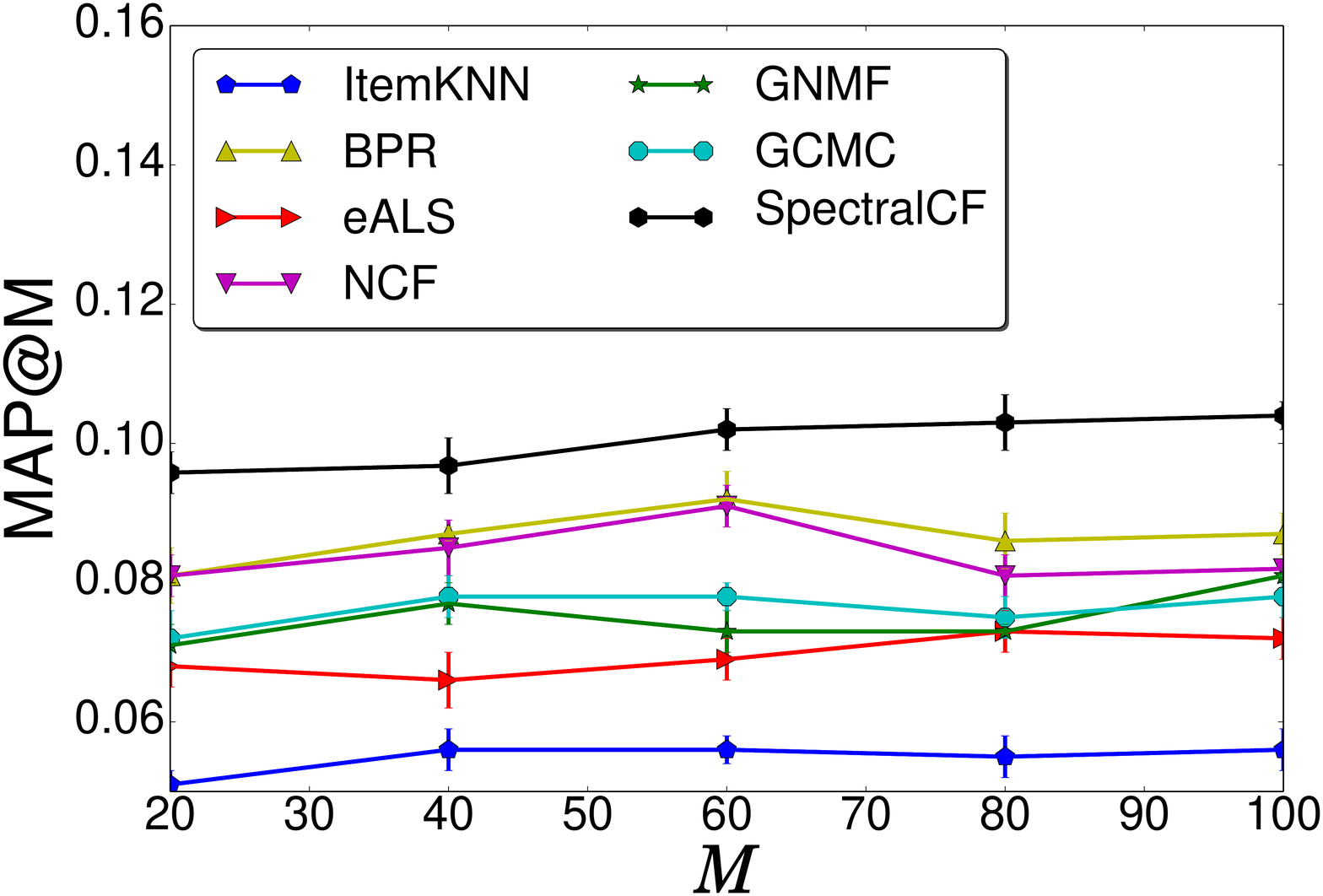}
\caption{MovieLens-1M}
\end{subfigure}
\begin{subfigure}{0.33\textwidth}
\centering
\includegraphics[height=.17\textheight]{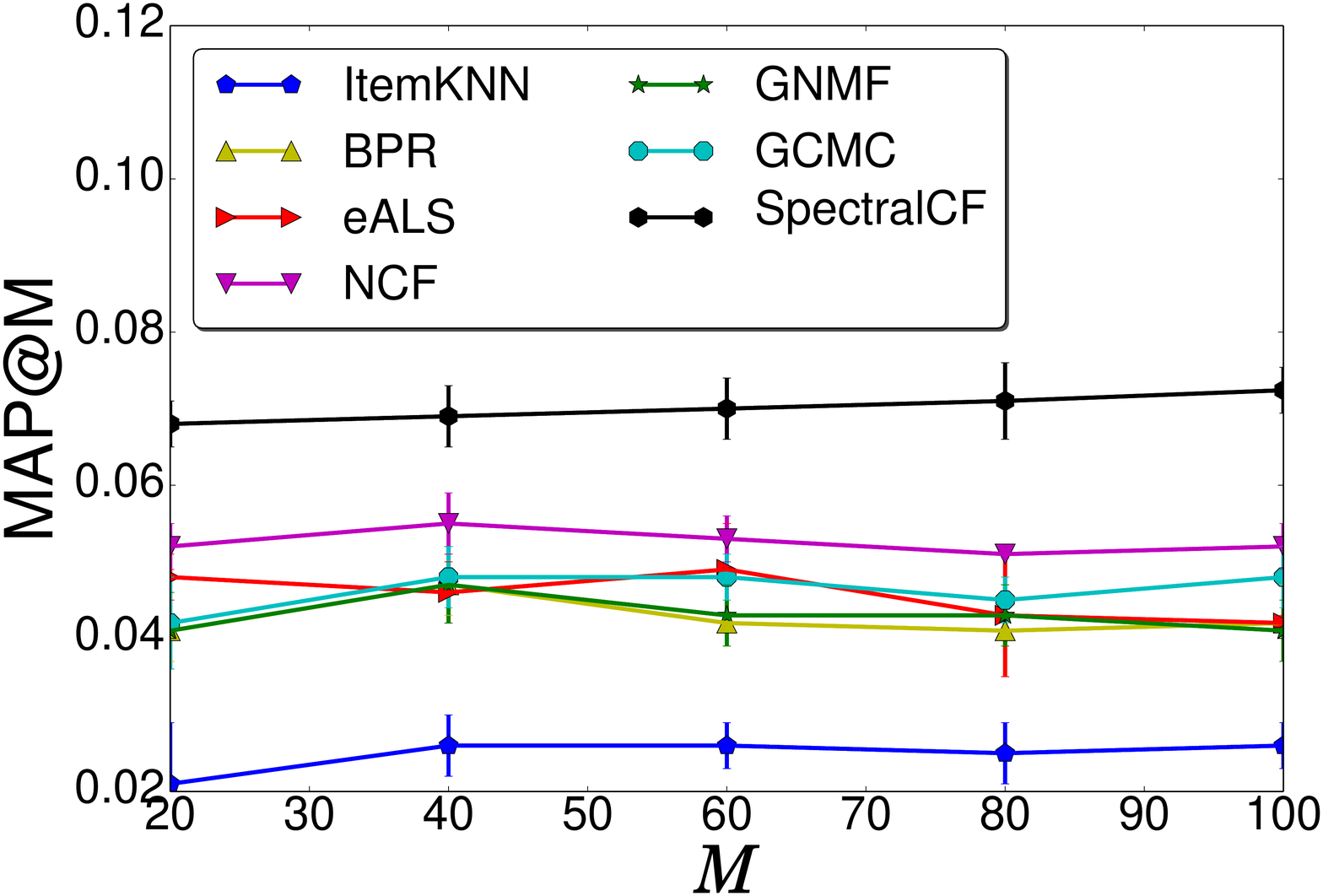}
\caption{HetRec}
\end{subfigure}
\begin{subfigure}{0.33\textwidth}
\centering
\includegraphics[height=.17\textheight]{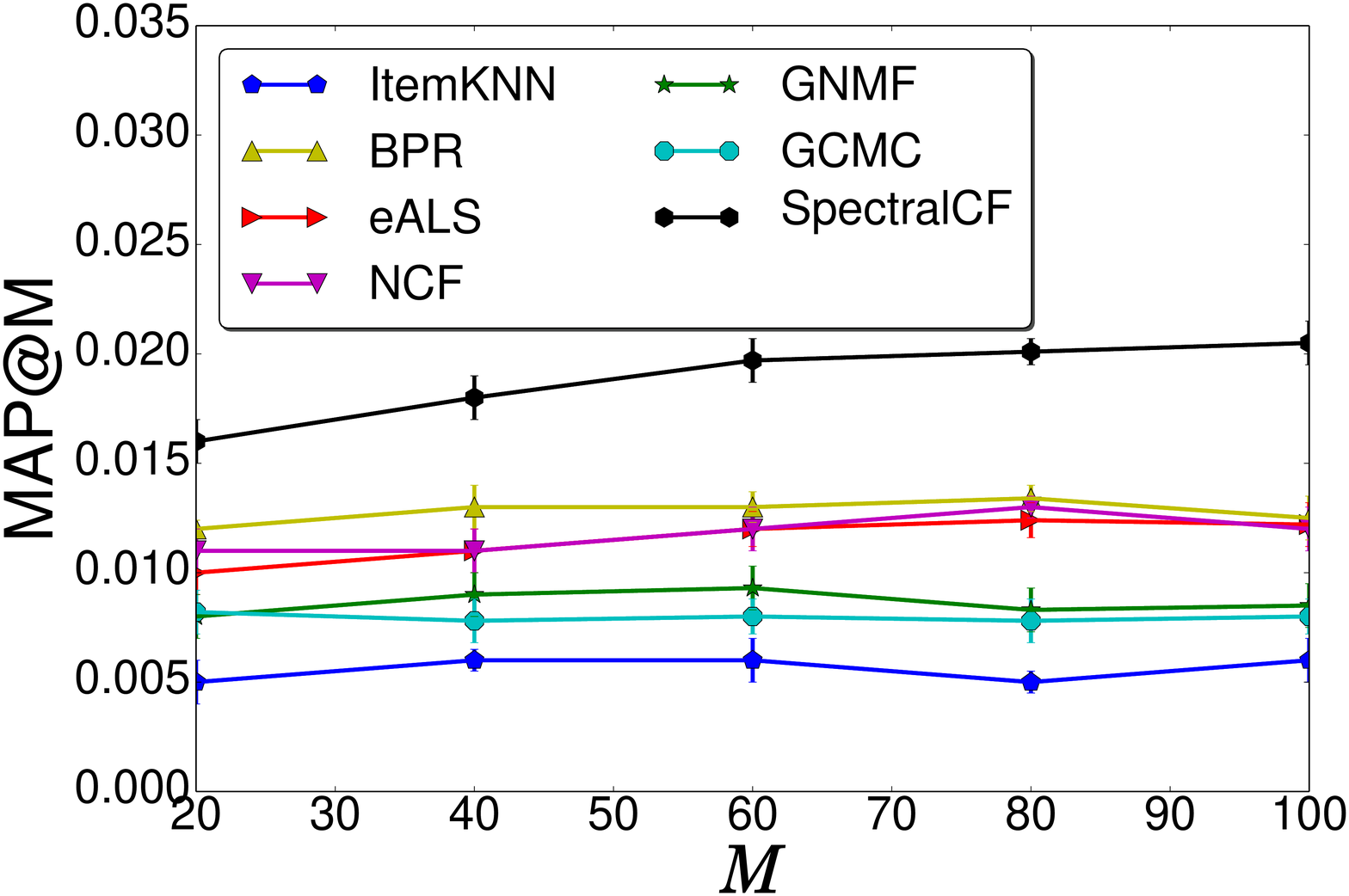}
\caption{Amazon Instant Video}
\end{subfigure}
\end{center}
\vspace{-1em}
\caption{Performance comparison in terms of MAP@M with M varied from 20 to 100. Errors bars are 1-standard deviation.}
\label{exp::map}
\end{figure*}
In Figure \ref{exp::recall}, we compare SpectralCF with four CF-based models and two graph-based models in terms of Recall@M on all three datasets. Overall, when $M$ is varied from 20 to 100, SpectralCF consistently yields the best performance across all cases. Among CF-based comparative models, ItemKNN gives the worst performances in all three datasets, indicating the necessity of modeling users' personalized preferences rather than just recommending similar items to users. For graph-based models (GNMF and GCMC), they generally underperform CF-based models such as BPR and NCF. The unsatisfying performance of GNMF shows that adding a graph-based regularization is not sufficient to capture complex structures of graphs. Though GCMC directly performs on a user-item bipartite graph, each vertex in the graph is only allowed to learn from its neighbors. This constrains its ability of capturing global structures in the graph. Among all comparative models, benefiting from its capability of modeling non-linear relationships between users and items, NCF beats all other models and becomes the strongest one. However, none of models above are able to directly perform in the \textit{spectral domain}. They lose the rich information in the domain and as a result, SpectralCF greatly outperforms NCF by \textbf{16.1\%}, \textbf{16.2\%} and \textbf{28.0\%} in the dataset of \textit{MovieLen-1M}, \textit{HetRec} and \textit{Amazon Instant Video}, respectively.

In Figure \ref{exp::map}, we compare SpectralCF with all comparative models in terms of MAP@M. Again, when $M$ is in a range from 20 to 100, SpectralCF always yields the best performance. Neighbor-based ItemKNN performs the worst among all models. It further shows the advantages of modeling users' personalized preferences. Compared with NCF and BPR, graph-based models (GNMF and GCMC) again fail to show convincing ranking performances measured by MAP@M. For CF-based models, while NCF beats other CF-based models in the dataset of \textit{HetRec}, BPR shows itself as a strong model for ranking, owing to its pairwise ranking loss. It slightly outperforms NCF on average in the datasets of \textit{MovieLens-1M} and \textit{Amazon Instant Video}. However, SpectralCF improves BPR by \textbf{15.9\%}, \textbf{64.9\%} and \textbf{47.5\%} in the dataset of \textit{MovieLen-1M}, \textit{HetRec} and \textit{Amazon Instant Video}, respectively. 

Overall, as shown in Figure~\ref{exp::recall} and~\ref{exp::map}, not surprisingly, the performances of all models decline as the dataset becomes sparse. However, SpectralCF always outperforms all comparative models regardless of the sparsities of the datasets. By comparing spectralCF with traditional CF-based models, we demonstrate that the rich information of connectivity existing in the \textit{spectral domain} assists SpectralCF in learning better latent factors of users and items. By comparing SpectralCF with graph-based models, we show that SpectralCF can effectively learn from the \textit{spectral domain}.
\subsection{Quality of Recommendations for Cold-start Users (RQ3)}
To answer \textbf{RQ3}, in this section, we conduct an experiment to investigate the quality of recommendations made by SpectralCF for \textit{cold-start} users. To this end, in the dataset of \textit{MovieLens-1M}, we build training sets with different degrees of sparsity by varying the number of items associated with each user, denoted as $P$, from one to five. All the remaining items associated with users are used as the test set. We compare SpectralCF with BPR, which is widely known and also shown as a strong ranking performer in Figure \ref{exp::map}. The test results are reported in the Table~\ref{table::sparse}. 

In Table~\ref{table::sparse}, it is shown that, suffering from the \textit{cold-start} problem, the performances of BPR and SpectralCF inevitably degrade. However, regardless of the number of items associated with users, SpectralCF consistently outperforms BPR in terms of Recall@20 and MAP@20. On average, SpectralCF improves BPR by \textbf{36.8\%} and \textbf{33.8\%} in Recall@20 and MAP@20, respectively. Hence, it is demonstrated that compared with BPR, spectralCF can better handle \textit{cold-start} users and provide more reliable recommendations.  
\section{Related Works}
\label{sec:related}
\begin{table}[t]
\centering
\caption{Performance Comparison in terms of Recall@20 and MAP@20 in the sparse training sets. In the dataset of \textit{MovieLens-1M}, we vary the number of items associated with each users, denoted as $P$, from 1 to 5. The average results are reported and the best results are in bold. The standard deviation is shown in parentheses.}
\label{table::sparse}
\begin{tabular}{|M{0.65cm}||M{1.23cm}|M{0.8cm}|M{0.8cm}|M{0.8cm}|M{0.8cm}|M{0.8cm}|}
\hline
\multirow{4}{*}{} &  P&  1&  2&  3&  4& 5 \\ \cline{2-7} 
                  &  BPR& 0.021 (0.003)& 0.029 (0.004)& 0.031 (0.003)& 0.034 (0.004)& 0.038 (0.003) \\ \cline{2-7} 
                  Recall @20&  SpectralCF&\textbf{0.031} (0.003)&\textbf{0.039} (0.003)&\textbf{0.042} (0.002)&\textbf{0.045} (0.003)&\textbf{0.051} (0.003)  \\ \cline{2-7} 
                  &  Improve- ment&47.6\%&34.5\%&35.5\%&32.4\%&34.2\%  \\ \hline
\multirow{4}{*}{}  
                  &  BPR& 0.014 (0.002)& 0.017 (0.002)& 0.021 (0.002)& 0.024 (0.003)& 0.027 (0.003) \\ \cline{2-7} 
                  MAP @20&  SpectralCF&\textbf{0.019} (0.002)&\textbf{0.024} (0.002)&\textbf{0.028} (0.003)&\textbf{0.031} (0.003)&\textbf{0.035} (0.002)  \\ \cline{2-7} 
                  &  Improve- ment&35.7\%&41.2\%&33.3\%&29.2\%&29.6\% \\ \hline
\end{tabular}
\end{table}
There are two categories of studies related to our work: deep learning based RS and graph-based RS. In this section, we will first briefly review existing works in the area of deep RS. Then, we focus on presenting recent works on graph-based RS. Despite all these approaches, SpectralCF is the first model to directly learn latent factors of users and items from the \textit{spectral domains} of user-item bipartite graphs.
\subsection{Deep Recommender Systems}
One of the early works utilizing deep learning for RS builds a Restricted Boltzmann Machines (RBM) based method to model users using their rating preferences \cite{salakhutdinov2007restricted}. Although the method is still a relatively shallow model, it slightly outperforms Matrix Factorization technique and shows the promising future for deep recommender systems. In \cite{wang2017irgan}, a generative model and a discriminative model are employed to play a minimax game. The two models are iteratively optimized and achieve promising results for the item recommendation problem. Inspired by \cite{salakhutdinov2007restricted}, \cite{zheng2016neural} proposed a CF Neural Autoregressive Distribution Estimator (CF-NADE) model for collaborative filtering tasks. CF-NADE shares parameters between different ratings. \cite{neuralCF} presents to utilize a Multilayer Perceptron (MLP) to model user-item interactions. 

A number of researchers proposed to build a hybrid recommender systems to counter the sparsity problem. \cite{wang2014improving} introduce Convolutional Neural Networks (CNN) and Deep Belief Network (DBN) to assist representation learning for music data. As such, their model is able to extract latent factors of songs without ratings while CF based techniques like MF are unable to handle these songs. These approaches above pre-train embeddings of users and items with matrix factorization and utilize deep models to fine-tune the learned item features based on item content. 
In \cite{elkahky2015multi} and \cite{wang2017deep}, multi-view deep models are built to utilize item information from more than one domain. 
\cite{kim2016convolutional} integrates a CNN with PMF to analyze documents associated with items to predict users' future explicit ratings. \cite{zheng2017joint} leverage two parallel neural networks to jointly model latent factors of users and items. To incorporate visual signals into RS, \cite{wang2017your} propose CNN-based models to incorporate visual signals into RS. They make use of visual features extracted from product images using deep networks to enhance the performance of RS. \cite{zhang2016collaborative} investigates how to leverage the multi-view information to improve the quality of recommender systems. \cite{cheng2016wide} jointly trains wide linear models and deep neural networks for video recommendations. \cite{wang2016collaborative} and \cite{zheng2017hierarchical} utilize RNN to consider word orders and extract complex semantics for recommendations. \cite{wang2017dynamic} applies an attention mechanism on a sequence of models to adaptively capture the change of criteria of editors. \cite{zheng2018mars} leverages an attentional model to learn adaptive user embeddings. A survey on the deep learning based RS with more works on this topic can be found in \cite{zhang2017deep}.

\subsection{Graph-based Recommender Systems}
In order to learn latent factors of users and items from graphs, a number of researchers have proposed graph-based RS. \cite{zhou2008learning} develops a semi-supervised learning model on graphs for document recommendation. The model combines multiple graphs in order to measure item similarities. In \cite{yuan2014graph}, they propose to model the check-in behaviors of users and a graph-based preference propagation
algorithm for point of interest recommendation. The proposed solution exploits both the geographical and temporal influences
in an integrated manner. \cite{guan2009personalized} addresses the problem of personalized tag recommendation by modeling it as a "query and ranking" problem.
Inspired by the recent success of graph/node embedding methods, 
\cite{berg2017graph} proposes a graph convolution network based model for recommendations. In \cite{berg2017graph}, a graph auto-encoder learns the structural information of a graph for latent factors of users and items. \cite{cai2008non} adds graph-based regularizations into the matrix factorization model to learn graph structures. Graph-regularized methods are developed for the problemm of matrix completion in \cite{rao2015collaborative}. \cite{monti2017geometric} combines a convolutional neural network and a recurrent neural network to model
the dynamic rating generation process. Although this work also considers the \textit{spectral domain}, they learn from a graph constructed from side information, such as genres or actors for movies. In contrast, our method learns directly from user-item bipartite graphs and does not require the side information. Thus, this work is not comparable to our method.

Additionally, some scholars have proposed to incorporate the heterogeneous information on a graph for recommendations. \cite{jamali2013heteromf} suggests a general latent factor model for entities in a graph. \cite{yu2013recommendation} introduces a recommendation model for implicit data by taking advantage of different item similarity semantics in the graph. \cite{shi2015semantic} introduces a semantic path based personalized recommendation method to predict the rating scores of users on items.

However, all works above are different from ours because they fail to consider the rich information in the \textit{spectral domains} of user-item bipartite graphs. Also, our study focuses on learning from the implicit feedbacks, and leaves incorporating the heterogeneous information in a graph and the item content for future works. 
\section{Conclusions}
\label{sec:con}
It is shown that the rich information of connectivity existing in the \textit{spectral domain} of a bipartite graph is helpful for discovering deep connections between users and items. In this paper, we introduce a new spectral convolution operation to directly learn latent factors of users and items from the \textit{spectral domain}. Furthermore, with the proposed operation, we build a deep feed-forward neural network based recommendation model, named Spectral Collaborative Filtering (SpectralCF). Due to the rich information of connectivity existing in the \textit{spectral domain}, compared with previous works, SpectralCF is capable of discovering deep connections between users and items and therefore, alleviates the \textit{cold-start} problem for CF. To the best of our knowledge, SpectralCF is the first CF-based method directly learning from the \textit{spectral domains} of user-item bipartite graphs. We believe that it shows the potential
of conducting CF in the \textit{spectral domain}, and will encourage future works in this direction.

In comparison with four state-of-the-art CF-based and two graph-based models, SpectralCF achieved $\textbf{20.1\%}$ and $\textbf{42.6\%}$ improvements averaging on three standard datasets in terms of Recall@M and MAP@M, respectively. 

Additionally, in the experiments, by varying the number of items associated with each user from 1 to 5, we build training sets with different degrees of sparsity to investigate the quality of recommendations made by SpectralCF for \textit{cold-start} users. By comparing SpectralCF with BPR, on average, SpectralCF improves BPR by \textbf{36.8\%} and \textbf{33.8\%} in Recall@20 and MAP@20, respectively. It is validated that SpectralCF can effectively ameliorate the \textit{cold-start} problem.
\begin{acks}
This work is supported in part by NSF through grants IIS-1526499, IIS-1763325, and CNS-1626432, and NSFC 61672313. This work is also partially supported by NSF through grant IIS-1763365 and by FSU through the startup package and FYAP award.
\end{acks}

\bibliographystyle{ACM-Reference-Format}
\bibliography{bib.bib}

\end{document}